\numberwithin{equation}{section}
\numberwithin{figure}{section}
\theoremstyle{plain}
\newtheorem{thm}{\protect\theoremname}
\theoremstyle{plain}
\newtheorem{lem}[thm]{\protect\lemmaname}
\theoremstyle{remark}
\newtheorem{rem}[thm]{\protect\remarkname}
\theoremstyle{plain}
\newtheorem{prop}[thm]{\protect\propositionname}
\theoremstyle{plain}
\newtheorem{cor}[thm]{\protect\corollaryname}
\providecommand{\corollaryname}{Corollary}
\providecommand{\lemmaname}{Lemma}
\providecommand{\propositionname}{Proposition}
\providecommand{\remarkname}{Remark}
\providecommand{\theoremname}{Theorem}
\begin{document}

\title{On the extreme power of nonstandard programming languages}

\author{Takuma Imamura}

\address{Research Institute for Mathematical Sciences\\
Kyoto University\\
Kitashirakawa Oiwake-cho, Sakyo-ku, Kyoto 606-8502, Japan}

\email{timamura@kurims.kyoto-u.ac.jp}
\begin{abstract}
Suenaga and Hasuo introduced a nonstandard programming language ${\bf While}^{{\bf dt}}$
which models hybrid systems. We demonstrate why ${\bf While}^{{\bf dt}}$
is not suitable for modeling actual computations.
\end{abstract}

\keywords{hybrid systems; hypercomputation; nonstandard analysis.}

\subjclass[2000]{34A38; 68Q10; 03H10.}
\maketitle

\section{Introduction}

Suenaga and Hasuo \cite{SH11} introduced an imperative programming
language ${\bf While}^{{\bf dt}}$, which is the usual ${\bf While}$
programming language equipped with a positive infinitesimal ${\bf dt}$.
This language is intended to \emph{hyperdiscretise} hybrid systems
and enable them to be formally verified by Hoare logic \cite{Hoa69}.
On the other hand, this language is not intended to be a model of
\emph{actual} computation. We demonstrate why ${\bf While}^{{\bf dt}}$
is not suitable for modeling actual computations. The main reason
is that ${\bf While}^{{\bf dt}}$ has too strong computational power
\emph{caused by physically impossible settings}. We clarify the causes
of the power.

We refer to Suenaga and Hasuo \cite{SH11} for the definition of ${\bf While}^{{\bf dt}}$;
Robinson \cite{Rob66} for nonstandard analysis; Shen and Vereshchagin
\cite{SV03} for computability theory.

\section{Computation beyond Turing machine model}

\subsection{The first cause: unrestricted use of reals}

The first cause of the power is that ${\bf While}^{{\bf dt}}$ is
furnished with the constant symbols $c_{r}$ for \emph{all} real numbers
$r\in\mathbb{R}$ and the exact comparison operator $<$. They bring
much strong computational power to this language as we will see below.
\begin{lem}
${\bf While}^{{\bf dt}}$ computes the floor function on $\prescript{\ast}{}{\mathbb{R}}$.
\end{lem}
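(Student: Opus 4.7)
The plan is to exhibit a short ${\bf While}^{{\bf dt}}$ program that computes $\lfloor x\rfloor$ by linear search in unit steps, and to appeal to the hyperfinite semantics of while loops, together with the real constants $c_{r}$ and the exact comparison $<$, to conclude that the program terminates with the correct value on every hyperreal input.

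For $x\geq 0$ I would run
\[
  n := c_{0};\ \text{while } (n + c_{1} \leq x)\ \text{do}\ n := n + c_{1}.
\]
Standardly, this loop halts after $\lfloor x \rfloor$ iterations on every non-negative standard real $x$, leaving the variable $n$ equal to $\lfloor x \rfloor$. Since the nonstandard semantics of ${\bf While}^{{\bf dt}}$ interprets a while loop as the $\ast$-transfer of its standard meaning, a hyperfinite number of iterations is permitted; by transfer, the same program on a non-negative $x \in \prescript{\ast}{}{\mathbb{R}}$ halts after the hyperinteger $\lfloor x \rfloor$-many iterations, with $n = \lfloor x \rfloor$ on termination. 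The case $x<0$ is handled symmetrically by a decrementing loop with guard $x < n$ starting from $n := c_{0}$; for instance, on $x = -2.3$ one gets $n$ successively $0, -1, -2, -3$, stopping at $-3 = \lfloor -2.3 \rfloor$. The two branches are glued together with an \textbf{if} on the exact test $x < c_{0}$.

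The main point --- and arguably the only real obstacle --- is not the construction of the program, which is entirely routine, but the invocation of the hyperfinite loop semantics. Once one accepts that a while loop in ${\bf While}^{{\bf dt}}$ may iterate a nonstandard-finite number of times, which is precisely what Suenaga and Hasuo use to hyperdiscretise continuous dynamics, the lemma is immediate by transfer. The real constants $c_{r}$ and the exact comparison $<$ play their role only in allowing the input $x$ to be an arbitrary hyperreal and in permitting the tests $n + c_{1} \leq x$ and $x < n$ to be evaluated exactly; without the hyperfinite semantics the program would simply diverge on every non-standard $x$.
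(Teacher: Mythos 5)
Your proposal is correct and follows essentially the same route as the paper: a unit-step linear search whose termination is guaranteed by the exact comparison operator, with the hyperfinite loop semantics allowing the search to run for an infinite hypernatural number of iterations on nonstandard inputs. The only difference is cosmetic (you split the positive and negative cases into two loops, while the paper uses a single loop over $|n|$ with a disjunctive guard and a final sign-fixing \textbf{if}), and your explicit remark about the role of the hyperfinite semantics makes precise what the paper leaves implicit.
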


\begin{proof}
The following (pseudo) ${\bf While}^{{\bf dt}}$-program computes
the floor function.
\[
\begin{array}{l}
{\tt Input}:\ x\\
{\tt Output}:\ y\\
n:=0;\\
{\tt while}\ \neg\left(n\leq x<n+1\vee-n\leq x<-n+1\right)\ {\tt do}\\
\quad n:=n+1;\\
{\tt if}\ x\geq0\\
{\tt then}\ y:=n\\
{\tt else}\ y:=-n
\end{array}
\]
\end{proof}
\begin{rem}
The floor function is a typical example of a uncomputable real function
(see Weihrauch \cite{Wei00} p. 6).
\end{rem}

\begin{prop}
\label{prop:WHILEdt-decides-st-probs}${\bf While}^{{\bf dt}}$ computes
every standard decision problem on $\prescript{\ast}{}{\mathbb{N}}$.
\end{prop}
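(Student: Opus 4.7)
The plan is to encode an arbitrary subset of $\mathbb{N}$ inside a single real constant and then decode its bits using the floor routine already established. Given a standard decision problem on $\prescript{\ast}{}{\mathbb{N}}$---that is, a set of the form $\prescript{\ast}{}{A_{0}}$ for some $A_{0}\subseteq\mathbb{N}$---I introduce the characteristic real
\[
r_{A}\;=\;\sum_{n=0}^{\infty}\chi_{A_{0}}(n)\,2^{-(n+1)}\;\in\;[0,1].
\]
Because ${\bf While}^{{\bf dt}}$ is equipped with a constant symbol $c_{r}$ for \emph{every} $r\in\mathbb{R}$, the constant $c_{r_{A}}$ is available regardless of whether $A_{0}$ is classically computable. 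This is the whole source of the ``cheat'': an arbitrary oracle is smuggled in as a literal.

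Next I would build a (pseudo) ${\bf While}^{{\bf dt}}$-program that, on input $N$, extracts the $N$-th binary digit of $r_{A}$. Initialize $m:=1$ and iterate $m:=2m$ exactly $N+1$ times using a ${\tt while}$ loop driven by a counter against $N$; this yields $m=2^{N+1}$. Then form the hyperreal $x:=m\cdot c_{r_{A}}$, apply the floor subroutine of the preceding lemma to obtain $k:=\lfloor x\rfloor$, and output $k\bmod 2$ (computable with one more short loop that repeatedly subtracts $2$).

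For correctness it suffices to verify the standard identity $\chi_{A_{0}}(n)=\lfloor 2^{n+1}r_{A}\rfloor\bmod 2$ for every $n\in\mathbb{N}$, which is immediate from the binary expansion of $r_{A}$; the transfer principle then upgrades this to $\chi_{\prescript{\ast}{}{A_{0}}}(N)=\lfloor 2^{N+1}r_{A}\rfloor\bmod 2$ for every $N\in\prescript{\ast}{}{\mathbb{N}}$, which is exactly what the program returns. The only delicate point is that both the doubling counter and the floor subroutine may now execute for infinitely many hypernatural steps when $N$ is unlimited; but this is precisely the nonstandard loop semantics that ${\bf While}^{{\bf dt}}$ already sanctions---indeed, this is the ``hypercomputational'' behaviour the paper is diagnosing---so it is not an obstacle to the proof but rather the phenomenon being exhibited.
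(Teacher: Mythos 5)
Your overall strategy is exactly the paper's: smuggle the oracle $A_{0}$ into a single real constant via its characteristic digits, and recover the $N$-th digit using arithmetic together with the floor subroutine of the preceding lemma. However, your choice of base $2$ introduces a genuine error in the correctness step. The identity $\chi_{A_{0}}(n)=\lfloor 2^{n+1}r_{A}\rfloor\bmod 2$ is \emph{not} valid for every $A_{0}$ and $n$: writing $2^{n+1}r_{A}=\sum_{m\leq n}\chi_{A_{0}}(m)2^{n-m}+\sum_{m>n}\chi_{A_{0}}(m)2^{n-m}$, the tail sum is $\leq 1$, with equality precisely when every $m>n$ lies in $A_{0}$; in that case the floor absorbs a carry and your formula returns $1-\chi_{A_{0}}(n)$. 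Concretely, for $A_{0}=\left\{ 1,2,3,\dots\right\} $ one has $r_{A}=1/2$, and your program outputs $1$ on input $0$ although $0\notin A_{0}$. So the proof as written fails for cofinite sets. (One could object that cofinite sets are decidable anyway, but you make no such case split, and the digit-extraction identity is asserted uniformly and called ``immediate.'')

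The repair is exactly the device the paper uses: encode in base $3$ with digits restricted to $\left\{ 0,1\right\} $, i.e. $r=\sum_{i}3^{-i}\chi_{A_{0}}(i)$. Then the tail $\sum_{i>n}3^{-i}\chi_{A_{0}}(i)\leq 3^{-n}/2<3^{-n}$ can never carry into the $n$-th digit, and the extraction (phrased in the paper as testing $a-3\left\lfloor \left(1/3\right)a\right\rfloor \geq1$ after multiplying $a$ by $3$ the appropriate number of times) is unconditionally correct. Alternatively, you could force a $0$ into every other binary position. The rest of your argument --- the availability of $c_{r}$ for arbitrary $r\in\mathbb{R}$, the doubling loop, the appeal to transfer to pass from $n\in\mathbb{N}$ to $N\in\prescript{\ast}{}{\mathbb{N}}$, and the observation that hyperfinitely long loops are sanctioned by the semantics --- matches the paper's proof and is fine.
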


\begin{proof}
Let $A\subseteq\mathbb{N}$. The constant $r=\sum_{i=0}^{\infty}3^{-i}\chi_{A}\left(i\right)$,
where $\chi_{A}$ is the characteristic function of $A$, has complete
information deciding the membership of $A$. Consider the following
program.
\[
\begin{array}{l}
{\tt Input}:\ x\\
{\tt Output}:\ y\\
a:=r;\\
{\tt while}\ x\neq0\ {\tt do}\\
\quad a:=3\cdot a;\\
\quad x:=x-1;\\
{\tt if}\ a-3\cdot\left\lfloor \left(1/3\right)\cdot a\right\rfloor \geq1\\
{\tt then}\ y:=1\\
{\tt else}\ y:=0
\end{array}
\]
This computes the characteristic function $\chi_{\prescript{\ast}{}{A}}$
of $\prescript{\ast}{}{A}$ for all (standard and nonstandard) inputs.
\end{proof}
\begin{cor}
\label{cor:WHILEdt-computes-st-funs}${\bf While}^{{\bf dt}}$ computes
every standard function on $\prescript{\ast}{}{\mathbb{N}}$.
\end{cor}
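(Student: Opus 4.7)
The plan is to reduce the computation of $\prescript{\ast}{}{f}$ to the decision procedure for its graph, which \prettyref{prop:WHILEdt-decides-st-probs} already yields, and then extract the output by an unbounded search whose number of iterations is allowed to be hyperfinite.

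First, I would fix a standard arithmetic pairing function $\langle\cdot,\cdot\rangle\colon\mathbb{N}^{2}\to\mathbb{N}$ (for instance, Cantor pairing), which is expressible by a closed arithmetic term built from the operations of ${\bf While}^{{\bf dt}}$, and therefore by transfer acts as a pairing on $\prescript{\ast}{}{\mathbb{N}}$. Given a standard function $f\colon\mathbb{N}\to\mathbb{N}$, the set $A_{f}=\left\{ \langle x,y\rangle\mid f(x)=y\right\} \subseteq\mathbb{N}$ is standard, so by \prettyref{prop:WHILEdt-decides-st-probs} there is a ${\bf While}^{{\bf dt}}$-subprogram $D$ that decides membership in $\prescript{\ast}{}{A_{f}}$ on all of $\prescript{\ast}{}{\mathbb{N}}$.

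Second, I would build the main program as the unbounded search
\[
\begin{array}{l}
{\tt Input}:\ x\\
{\tt Output}:\ y\\
y:=0;\\
{\tt while}\ D(\langle x,y\rangle)=0\ {\tt do}\\
\quad y:=y+1.
\end{array}
\]
By transfer, $\prescript{\ast}{}{f}$ is a total function on $\prescript{\ast}{}{\mathbb{N}}$, so for every input $x$ there is a unique $y\in\prescript{\ast}{}{\mathbb{N}}$ with $\langle x,y\rangle\in\prescript{\ast}{}{A_{f}}$, and the loop halts with exactly that value. For functions of higher arity one iterates the pairing function on the inputs.

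The only real point to check, and what I expect to be the main (albeit minor) obstacle, is that when $\prescript{\ast}{}{f}(x)$ is nonstandard the search must traverse a hyperfinite number of iterations before exiting. This is precisely the mechanism already exploited in the proof of \prettyref{prop:WHILEdt-decides-st-probs}, where the loop ${\tt while}\ x\neq0\ {\tt do}\ x:=x-1$ performs a hyperfinite number of steps on a nonstandard $x$; the nonstandard operational semantics of ${\bf While}^{{\bf dt}}$ legitimises such executions, so nothing new is required here.
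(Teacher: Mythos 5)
Your proposal is correct and is essentially the paper's own proof: both encode the graph of $f$ as a standard subset $A\subseteq\mathbb{N}$ via a pairing function, invoke \prettyref{prop:WHILEdt-decides-st-probs} to decide $\prescript{\ast}{}{A}$, and recover $\prescript{\ast}{}{f}(x)$ by an unbounded (possibly hyperfinite) search on $y$. Your explicit remark that the search may run for a hyperfinite number of iterations is a point the paper leaves implicit, but it introduces no new argument.
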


\begin{proof}
Let $f\colon\mathbb{N}\to\mathbb{N}$. Consider the set $A=\set{J\left(x,f\left(x\right)\right)|x\in\mathbb{N}}$,
where $J\colon\mathbb{N}\times\mathbb{N}\to\mathbb{N}$ is a computable
bijection. ${\bf While^{dt}}$ decides $\prescript{\ast}{}{A}$ by
\prettyref{prop:WHILEdt-decides-st-probs}. The following program
computes $\prescript{\ast}{}{f}$ for all inputs.
\[
\begin{array}{l}
{\tt Input}:\ x\\
{\tt Output}:\ y\\
y:=0;\\
a:=1;\\
{\tt while}\ J\left(x,y\right)\notin\prescript{\ast}{}{A}\ {\tt do}\\
\quad y:=y+1
\end{array}
\]
\end{proof}
\begin{rem}
Here the source of the computational power is not the use of infinitesimals.
The foregoing argument can be applied to any other models of hybrid
systems in which there is no restriction of discrete-continuous interaction.
\end{rem}

\subsection{The second cause: supertasks}

The second cause is that ${\bf While}^{{\bf dt}}$ can execute infinitely
many steps of computation whose computational resource consumption
(such as time, space and electricity usage and heat generation) is
$\gg0$.

While the following result is a special case of \prettyref{cor:WHILEdt-computes-st-funs},
the proof is based on an essentially different idea.
\begin{prop}
${\bf While}^{{\bf dt}}$ computes every ${\bf 0}'$-computable function
on $\mathbb{N}$.
\end{prop}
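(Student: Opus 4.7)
The plan is to reduce the problem to deciding the halting set $\mathbf{0}'\subseteq\mathbb{N}$ and then simulating an oracle Turing machine $M^{\mathbf{0}'}$ by standard $\mathbf{While}$-code in which every oracle query is replaced by a call to the halting decider. Since $\mathbf{While}$ already simulates ordinary Turing machines, this reduction is unproblematic, and the whole problem reduces to constructing a $\mathbf{While}^{\mathbf{dt}}$-subroutine $H$ that, given $e\in\mathbb{N}$, returns $1$ if the $e$-th Turing machine halts on input $e$ and $0$ otherwise. Note that unlike \prettyref{cor:WHILEdt-computes-st-funs}, the power here is extracted from $\mathbf{dt}$ itself rather than from an encoding of $A$ into a real constant.

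To build $H$, I would exploit $\mathbf{dt}$ to run a supertask. The program initialises a halting flag $h:=0$, a clock $t:=0$, and the initial configuration of a universal Turing machine fed with input $e$; it then executes a loop whose guard is $t<1$ and whose body (i) advances the simulation by one step, (ii) sets $h:=1$ the first time a halting configuration is encountered, and (iii) increments $t$ by $\mathbf{dt}$. Finally it outputs $h$. The loop body is executed $\lceil 1/\mathbf{dt}\rceil$ times, a hyperfinite integer larger than every standard natural number. If the $e$-th machine halts on input $e$, it does so in some standard number $n_{0}$ of steps, and since $n_{0}<\lceil 1/\mathbf{dt}\rceil$ the flag will be set to $1$. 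If it diverges, then the standard sentence ``for every $n\in\mathbb{N}$, the simulator is not in a halting configuration at step $n$'' is true, so by transfer the same holds for every $n\in\prescript{\ast}{}{\mathbb{N}}$, whence $h$ stays $0$ throughout.

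The main technical point---and the step deserving the most care---is to check that this informal reading is correctly aligned with the operational semantics of $\mathbf{While}^{\mathbf{dt}}$. The one-step simulation routine is a standard $\mathbf{While}$-program, but when executed inside $\mathbf{While}^{\mathbf{dt}}$ it acts on $\prescript{\ast}{}{\mathbb{N}}$-valued variables, and after $k$ iterations of the outer loop its state must be precisely the $k$-fold iterate of the (hyperextended) transition map on the initial configuration, for every $k\in\prescript{\ast}{}{\mathbb{N}}$. Once this identification is made, the transfer argument in the previous paragraph applies verbatim. The rest---wrapping $H$ inside a standard $\mathbf{While}$-simulation of $M^{\mathbf{0}'}$ that routes each oracle query through $H$---is an unproblematic construction in plain $\mathbf{While}$.
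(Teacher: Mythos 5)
Your proof is correct, but it takes a genuinely different route from the paper. The paper invokes Shoenfield's limit lemma to write $f=\lim_{s\to\infty}F(s,-)$ for a total computable $F$, observes that ${\bf While}^{{\bf dt}}$ computes $F$ even at infinite arguments, and simply outputs $F(N,x)$ for an infinite $N$ obtained from a ${\tt dt}$-driven loop; correctness is immediate because $F(s,x)$ is eventually constant in $s$ for standard $x$, and by transfer this constancy persists at infinite stages. You instead relativise to the halting set: you implement the oracle directly as a hyperfinite simulation of the universal machine for $\lceil 1/{\tt dt}\rceil$ steps, argue correctness of the decider by transfer in the divergent case, and then wrap it in a standard simulation of an oracle machine. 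Both arguments draw on the same source of power (a loop body executed a hyperfinite number of times within ``time'' $1$), so your proof is a legitimate instance of the supertask phenomenon the paper is isolating in this section. The trade-off: the paper's route outsources all the recursion theory to one cited lemma and yields a three-line program, whereas yours is more self-contained recursion-theoretically but incurs extra bookkeeping --- nested supertask calls inside the oracle-machine simulation, and the identification (which you rightly flag as the delicate point) of the program state after $k$ iterations with the $k$-fold iterate of the hyperextended transition map for all $k\in\prescript{\ast}{}{\mathbb{N}}$. That identification is an internal induction and does go through, so there is no gap; but note that the paper's version needs only the analogous fact for the single computable function $F$, which it treats as ``obvious,'' so your approach makes explicit a semantic obligation the paper elides.
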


\begin{proof}
Let $f\colon\mathbb{N}\to\mathbb{N}$ be $\bm{0}'$-computable. By
Schoenfield's limit lemma (see Theorem 48 of \cite{SV03}), there
is a computable function $F\colon\mathbb{N}\times\mathbb{N}\to\mathbb{N}$
such that $f=\lim_{s\to\infty}F\left(s,-\right)$. Obviously ${\bf While}^{{\bf dt}}$
computes $F$ for all inputs (with no use of uncomputable real numbers).
Consider the following program:
\[
\begin{array}{l}
{\tt Input}:\ x\\
{\tt Output}:\ y\\
y:=F\left(\infty,x\right)
\end{array}
\]
This computes the limit function $f$ for all standard inputs.
\end{proof}
\begin{rem}
The infinity constant $\infty$ can be eliminated as follows.
\[
\begin{array}{l}
t:=0;\\
u:=0;\\
{\tt while}\ t<1\ {\tt do}\\
\quad t:=t+{\tt dt};\\
\quad u:=u+1
\end{array}
\]
The variable $u$ is infinite after executing this program. The while
loop is repeated an infinite number of times. The instruction $u:=u+1$
in the loop consumes computational resource $\gg0$ in each execution.
\end{rem}

If ${\bf While^{dt}}$ can finish such an infinite sequence of operations
\emph{only within infinite time}, there is no problem involving the
computational power, because the computation by ${\bf While^{dt}}$-programs
is not considered to be actual one. On the other hand, if we want
to consider ${\bf While^{dt}}$ to be a model of computation in the
real world, ${\bf While^{dt}}$ must be able to finish such an infinite
sequence of operations \emph{within finite time}, or, in other words,
must admit \emph{supertasks}.

Thomson \cite{Tho54} did an insightful thought experiment to analyse
the (im)possibility of supertasks. There is a lamp with a switch.
The initial state of the switch is off. Consider the following supertask:
in the first $1/2$ sec, turn on the switch; in the next $1/4$ sec,
turn off the switch; in the next $1/8$ sec, turn on the switch; and
so on. After this supertask, is the lamp on or off? A similar circumstance
occurs in ${\bf While^{dt}}$. We identify `on' with $1$ and `off'
with $0$. Consider the following ${\bf While^{dt}}$-program. 
\[
\begin{array}{l}
{\it time}:=0;\\
{\it lamp}:=0;\\
{\tt while}\ {\it time}<1\ {\tt do}\\
\quad{\it time}:={\it time}+{\tt dt};\\
\quad{\it lamp}:=1-{\it lamp}
\end{array}
\]
This program eventually halts. The same question then arises: after
execution, is the value of ${\it lamp}$ on or off? The answer depends
on the interpretation of the infinitesimal constant ${\tt dt}$.
\begin{rem}
The same phenomena occur in other models of hypercomputation which
admit supertasks, such as the accelerated Turing machines (Copeland
\cite{Cop02}; Calude and Staiger \cite{CS10}).
\end{rem}

\section{Conclusion}

The nonstandard programming language ${\bf While}^{{\bf dt}}$ has
too strong computational power. However, this computational power
\emph{per se} is not an essential reason why this model is inappropriate,
because the Church--Turing thesis may be false (i.e. hypercomputation
may be physically realisable). The excessive power is a consequence
of the following causes: the unrestricted use of reals, the exact
comparison of reals, and the possibility of supertasks consuming infinite
resources. These are physically impossible. Because of this impossibility,
while actual hybrid systems can be modeled by ${\bf While^{dt}}$-programs,
some ${\bf While^{dt}}$-programs do not represent any actual hybrid
system. The same applies to other nonstandard ``models of computation''
such as ${\bf Sproc}^{{\bf dt}}$ (Suenaga \emph{et.al.} \cite{SSH13}),
${\bf NSF}$ (Nakamura \emph{et.al.} \cite{NKSI17,Nak18}) and the
internal Turing machines (Loo \cite{Loo04}).

Some restrictions are needed to metamorphose ${\bf While}^{{\bf dt}}$
into a model of actual hybrid computation. For instance, restricting
the electricity usage and/or the heat generation to finite, one can
avoid ``Thomson-type'' problems. In the Thomson's lamp experiment,
one needs infinite energy to switch the lamp infinitely many times.
This is an example of a \emph{bad} supertask. On the other hand, a
rubber ball uses only finite energy (the initial mechanical energy)
to bounce infinitely many times (\prettyref{fig:rubber-ball}). This
is an example of a \emph{good} supertask.
\begin{figure}
\begin{tikzpicture}
	\fill (0, 4) circle[radius=3pt];
	\draw[scale=1,domain=0:1,smooth,variable=\x] plot ({\x},{4 - 4 * \x * \x});
	\draw[scale=1,domain=1:2,smooth,variable=\x] plot ({\x}, {- 8 * (\x - 1) * (\x - 2)});
	\draw[scale=1,domain=2:2.5,smooth,variable=\x] plot ({\x}, {- 16 * (\x - 2) * (\x - 2.5)});
	\draw[scale=1,domain=2.5:2.625,smooth,variable=\x] plot ({\x}, {- 32 * (\x - 2.5) * (\x - 2.75)});
	\draw (-1,0) -- (4, 0);
\end{tikzpicture}\caption{\label{fig:rubber-ball}The rubber ball loses some kinetic energy
in each inelastic collision. The kinetic energy lost is converted
into other forms such as thermal energy. Then the interval of collision
becomes shorter. Thus the ball bounces infinitely many times within
finite time.}
\end{figure}
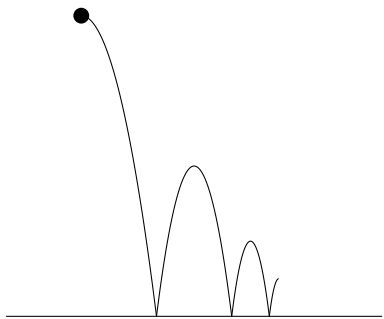

\bibliographystyle{plain}
\bibliography{bibtex}

\end{document}